\documentclass[11pt]{article}

\usepackage{a4wide}
\usepackage{amsmath}
\usepackage{amssymb}
\usepackage{tikz}
\usepackage{amsthm, mathtools}

\usepackage[colorlinks=true, linkcolor=blue, citecolor=blue]{hyperref}
\usepackage[nameinlink, noabbrev]{cleveref}
\usepackage{float}
\usepackage{mathrsfs}

\theoremstyle{plain} % italic body, bold head
\newtheorem{theorem}{Theorem}[section]      % numbered within sections
\newtheorem{corollary}[theorem]{Corollary}

\theoremstyle{definition} % upright body, bold head

\theoremstyle{remark} % upright body, italic head

\newtheorem*{theorem*}{Theorem}
\newtheorem*{lemma*}{Lemma}

\crefname{theorem}{Theorem}{Theorems}
\crefname{lemma}{Lemma}{Lemmas}
\crefname{proposition}{Proposition}{Propositions}
\crefname{corollary}{Corollary}{Corollaries}
\crefname{definition}{Definition}{Definitions}
\crefname{remark}{Remark}{Remarks}
\crefname{assumption}{Assumption}{Assumptions}
\crefname{example}{Example}{Examples}

\definecolor{darkgreen}{RGB}{0,175,20}

\title{Consistent pricing of bivariate interest rate exotics\\
via constrained Schr\"odinger optimal transport}
\author{Patrick Roome}
\date{June 2026}
\begin{document}
\maketitle

\begin{center}
\begin{minipage}{0.9\linewidth}
\small\bfseries
The statements, views and opinions expressed in this article are my own and do not necessarily
reflect those of my employer, JPMorgan Chase \& Co., its affiliates, other employees or clients.
\end{minipage}
\end{center}

\vspace{1em}

\begin{abstract}
We develop a modeling framework for pricing bivariate interest rate exotic derivatives that maintains consistency across three interconnected markets: CMS spread options and the two underlying CMS option markets that define the spread.
Our approach also enables the computation of no-arbitrage bounds for exotic derivatives given observable market prices in the spread option and underlying CMS option markets.
The method relies on solving the dual Lagrangian of a constrained version of the Shr\"odinger optimal transport problem and we demonstrate the practical applicability of our framework through concrete numerical examples that illustrate both the pricing methodology and the computation of no-arbitrage bounds. 
The approach offers a robust tool for pricing complex interest rate derivatives while ensuring consistency with liquid market instruments.
\end{abstract}

\section{Introduction}

Bivariate interest rate exotics are vanilla options with non-linear payoffs depending on two CMS rates. Any pricing model for these products must reproduce both CMS marginal distributions and the spread distribution implied by CMS spread option prices. The most direct approach is to specify a copula, which by construction matches the two marginal distributions. However, specifying a copula to accurately replicate the spread option market is non-trivial. While solutions have been proposed (see the power copulas in~\cite{Andersen2010}), achieving precise fits to the spread volatility smile remains challenging and often cumbersome.

In the context of FX options, Austing~\cite{Austing2011} proposed a candidate joint distribution that reproduces both the FX cross smile and the two underlying FX marginal distributions. This methodology translates naturally to interest rate markets (in normal rather than lognormal space), yielding a joint distribution consistent with the two CMS marginals and the spread distribution implied by spread option prices. While appealing, this approach has several drawbacks. 
Existence of the candidate joint distribution is not necessarily guaranteed, even when a valid joint distribution matching all three markets exists.
The approach also provides limited intuition about the resulting joint distribution relative to the full space of admissible distributions.

Piterbarg~\cite{Piterbarg2011} provides necessary and sufficient conditions for the existence of a joint distribution consistent with all three markets. He also derives no-arbitrage bounds for exotic prices by solving a linear program (and subsequently a quadratic program) subject to marginal and spread constraints. A drawback of this approach is its computational cost: solving the linear program scales cubically in the number of discretization points, which becomes prohibitive for large-scale problems.

In this paper, we provide a unified framework to construct a joint distribution consistent with all three markets and to compute no-arbitrage bounds for exotics given observable prices in the CMS and spread option markets. The joint distribution is anchored to a financially intuitive prior distribution\footnote{We use a Gaussian copula to construct the prior joint density but any reasonable joint density can be used in the construction.}, providing interpretability. Moreover, computing no-arbitrage bounds is faster than directly solving the linear (or quadratic) program as in~\cite{Piterbarg2011}. Our methodology solves the dual Lagrangian of a constrained Schr\"odinger bridge problem and is closest in spirit to~\cite{Guyon2024} and~\cite{GuyonBourgey2024}.

The paper is organized as follows. Section~\ref{sec: bridge} formally states the problem, establishes existence of solutions, and discusses the standard Sinkhorn approach. Section~\ref{sec: dual} derives the dual formulation and explains our preference for the dual problem. Section~\ref{sec:impl} details the implementation and solution, and Section~\ref{sec:numerics} applies the methodology to practical examples.

\section{Shr\"odinger bridge problem}\label{sec: bridge}

We are given a probability space $(\Omega,\mathcal{F},\mathbb{P})$ and two discrete random variables $X:\Omega\to\mathbb{R}$ and $Y:\Omega\to\mathbb{R}$ defined on that space with measures $\mu_{X}:=\sum_{i=1}^{n} \delta_{x_i}a_i$ and $\mu_{Y}:=\sum_{j=1}^{m} \delta_{y_j}b_j$ respectively.
Further, we suppose that we are given the measure $\nu_{Z}:=\sum_{k=1}^{l} \delta_{z_k}c_k$ of their spread $Z:=X-Y$ for some $ m+n-1\leq l \leq mn$.

Next we define $\mathcal{M}(\mu_{X},\mu_{Y},\nu_{Z})$ as the set of all joint (discrete) probability measures $P$ on $\mathbb{R}_{+}^{n,m}$ such that 
$$
X\sim \mu_{X}, \qquad Y\sim \mu_{Y} \quad \text{ and }\quad X-Y\sim \nu_{Z}.
$$
Necessary and sufficient conditions for the existence of a joint distribution satisfying these conditions is given in~\cite{Piterbarg2011} and we assume throughout that $\mathcal{M} \neq \emptyset$.
The set $\mathcal{M}$ can be written as $\mathcal{M}=\mathscr{C}_1\cap \mathscr{C}_2\cap \mathscr{C}_3 \cap \mathbb{R}_{+}^{n,m}$ where
\begin{align}
\mathscr{C}_1 &:= \Bigl\{P\in \mathbb{R}^{n \times m} : \sum_{j=1}^{m} P_{ij} = a_i,\; i=1,\ldots,n\Bigr\}, \nonumber \\[4pt]
\mathscr{C}_2 &:= \Bigl\{P\in \mathbb{R}^{n \times m} : \sum_{i=1}^{n} P_{ij} = b_j,\; j=1,\ldots,m\Bigr\}, \label{eq:constraint} \\[4pt] \nonumber
\mathscr{C}_3 &:= \Bigl\{P\in \mathbb{R}^{n \times m} : \sum_{(i,j)\in\mathscr{D}_k} P_{ij} = c_k,\; k=1,\ldots,l\Bigr\},
\end{align}
with 
\begin{equation}\label{eq:spreadset}
\mathscr{D}_{k} := \{(i,j) : x_i - y_j = z_k\}.
\end{equation}
For a reference joint probability $Q$ we define the Kullback--Leibler divergence between couplings as
$$
  \mathrm{KL}(P \mid\mid Q)
 :=
  \sum_{i,j} P_{i,j}\,\log\!\left(\frac{P_{i,j}}{Q_{i,j}}\right).
$$
Here if $P$ is not absolutely continuous w.r.t.$Q$, then we define $\mathrm{KL}(P \mid\mid Q)=+\infty$.
For a given payoff $C\geq0$ of $X$ and $Y$ the problem we aim to solve is 
\begin{equation}\label{eq:mainProb}
L^{\varepsilon,\beta}_{C}(a,b,c):=\operatorname*{min}_{P \in \mathscr{C}_1\cap \mathscr{C}_2\cap \mathscr{C}_3} \beta \sum_{i,j} P_{i,j} C_{i,j}+\varepsilon \mathrm{KL}(P \mid\mid Q).
\end{equation}
Note that the Kullback--Leibler divergence already embeds the positivity constraint for $P$. 
We are interested in the cases $\beta=\pm1, \varepsilon>0$ (which corresponds to computing bounds on the exotic with payoff $C$) and $\beta=0,\varepsilon=1$ (which corresponds to a pricing model independent of $C$).
This problem can easily be re-cast as a (static) Shr\"odinger bridge problem:
\begin{equation}\label{eq:ShrodProb}
 \operatorname*{min}_{P \in \mathscr{C}_1\cap \mathscr{C}_2\cap \mathscr{C}_3}  \mathrm{KL}(P \mid\mid Q e^{-\beta C/\varepsilon}).
\end{equation}
For a given convex set $\mathscr{C}\subset \mathbb{R}^{n\times m}$ the projection according to the Kullback--Leibler divergence is defined as
$$
P^{\textrm{KL}}_{\mathscr{C}}(\xi):=\operatorname*{arg\,min}_{P \in \mathscr{C}} \mathrm{KL}(P \mid\mid \xi).
$$
Since $\mathscr{C}_1$, $\mathscr{C}_2$ and $\mathscr{C}_3$ are all affine subspaces, the problem~\eqref{eq:ShrodProb} has a unique solution and can be solved by using iterative Bregman projections (see~\cite{BenamouCarlierCuturiNennaPeyre2015}).
More specifically, we initialise $P^{(0)}=Q e^{-\beta C/\varepsilon}$ and compute
\begin{equation}\label{eq:Bregman}
\forall n \in \mathbb{N}, \quad P^{(n)}:=P^{\textrm{KL}}_{\mathscr{C}_n}(P^{(n-1)}), 
\end{equation}
where we extend the indexing of the sets by $3$-periodicity, so that they satisfy
$$
\forall n \in \mathbb{N},\quad \mathscr{C}_{n+3}=\mathscr{C}_{n}.
$$
Then $ P^{(n)}$ converges to the unique solution of~\eqref{eq:ShrodProb}:
$$
P^{(n)} \to P^{\textrm{KL}}_{\mathscr{C}}(Q e^{-\beta C/\varepsilon})\quad \text{as }n\to\infty.
$$
Applying Bregman iterative projections~\eqref{eq:Bregman} to this splitting yields the well-known IPFP/Sinkhorn algorithm. Numerous papers discuss fast and robust implementations (see for example~\cite{Cuturi2013} and~\cite{PeyreCuturi2018}).

\section{Dual Lagrangian formulation}\label{sec: dual}

Typically we will be given three continuous distributions which we would need to discretise before applying the Sinkhorn methodology above. The discretisation structure of $X$ and $Y$ determines the cardinality of $\mathrm{supp}(\nu_{Z})$, which can range from $m+n-1$ to $mn$ elements depending on the grid alignment. Ideally, we would discretise using points from an efficient quadrature scheme. However, this requires first identifying $\mathrm{supp}(\nu_{Z})$, which is computationally undesirable. A common workaround is to define $\mathrm{supp}(\mu_X)$ and $\mathrm{supp}(\mu_Y)$ as arithmetic progressions with identical step sizes. While this ensures $\mathrm{supp}(\nu_{Z})$ is trivial to construct and always contains exactly $m+n-1$ elements, it sacrifices the connection to efficient and robust quadrature rules.

More fundamentally, our objective is to accurately reprice market option prices. The primal formulation above does not provide direct control over model-implied option prices. Although the Sinkhorn algorithm ensures that model-implied densities converge to market-implied densities, it remains unclear how small density discrepancies propagate to option pricing errors. This issue is particularly acute in interest rate markets, where long-dated maturities, convexity adjustments, and right-skewed densities amplify the sensitivity of option prices to subtle density variations.

Except in simple cases, the primal methodology fails to achieve the requisite accuracy for matching market option prices. We therefore reformulate the problem via its dual, which provides explicit control over option pricing errors and can be solved efficiently using standard gradient-based optimisation techniques:

\begin{theorem}\label{thm:dual}
The following duality results holds (where $L^{\varepsilon,\beta}_{C}(a,b,c)$ is defined in~\eqref{eq:mainProb}):
\begin{eqnarray*}
L^{\varepsilon,\beta}_{C}(a,b,c)=\operatorname*{max}_{f\in\mathbb{R}^{n},g\in\mathbb{R}^{m},h\in\mathbb{R}^{l}}
\sum_{i}f_i a_i&+&\sum_{j}g_j b_j + \sum_{k}h_k c_k  \\
&-&\varepsilon \sum_{i,j} Q_{i,j}e^{(f_i+g_j+h_{k(i,j)}-\beta C_{i,j})/\varepsilon} +\varepsilon,
\end{eqnarray*}
where $k(i,j)$ is such that $(i,j)\in \mathscr{D}_{k}$ (defined in~\eqref{eq:spreadset}).
\end{theorem}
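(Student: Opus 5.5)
The proof will go through standard Lagrangian duality for the entropically regularised linear program \eqref{eq:mainProb}. Introduce multipliers $f\in\mathbb{R}^n$, $g\in\mathbb{R}^m$, $h\in\mathbb{R}^l$ for the affine constraints defining $\mathscr{C}_1$, $\mathscr{C}_2$, $\mathscr{C}_3$, and form
\[
\mathcal{L}(P,f,g,h)=\beta\sum_{i,j}P_{ij}C_{ij}+\varepsilon\,\mathrm{KL}(P\mid\mid Q)+\sum_i f_i\Bigl(a_i-\sum_j P_{ij}\Bigr)+\sum_j g_j\Bigl(b_j-\sum_i P_{ij}\Bigr)+\sum_k h_k\Bigl(c_k-\sum_{(i,j)\in\mathscr{D}_k}P_{ij}\Bigr).
\]
Every pair $(i,j)$ lies in exactly one $\mathscr{D}_k$, namely the one with $z_k=x_i-y_j$. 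Hence the $h$-term can be rewritten as $\sum_k h_kc_k-\sum_{i,j}h_{k(i,j)}P_{ij}$, so that $\mathcal{L}$ separates over the entries $P_{ij}$. By construction $\sup_{f,g,h}\mathcal{L}(P,f,g,h)$ equals the primal objective when $P\in\mathscr{C}_1\cap\mathscr{C}_2\cap\mathscr{C}_3$ and $+\infty$ otherwise, so $L^{\varepsilon,\beta}_C=\inf_P\sup_{f,g,h}\mathcal{L}$.

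Next I compute the dual function $D(f,g,h)=\inf_{P\ge 0}\mathcal{L}$ entrywise. For each $(i,j)$ one minimises $p\mapsto \varepsilon p\log(p/Q_{ij})-p\,(f_i+g_j+h_{k(i,j)}-\beta C_{ij})$ over $p\ge0$. This function is strictly convex, and setting the derivative to zero gives $p^*=Q_{ij}e^{(f_i+g_j+h_{k(i,j)}-\beta C_{ij})/\varepsilon-1}$ with minimal value $-\varepsilon p^*$. Entries with $Q_{ij}=0$ are forced to $P_{ij}=0$ by the convention $\mathrm{KL}=+\infty$, and they contribute nothing. Summing over $(i,j)$ gives
\[
D(f,g,h)=\sum_i f_ia_i+\sum_j g_jb_j+\sum_k h_kc_k-\varepsilon e^{-1}\sum_{i,j}Q_{ij}e^{(f_i+g_j+h_{k(i,j)}-\beta C_{ij})/\varepsilon}.
\]
To reach the normalisation in the statement, shift $f\mapsto f+\varepsilon$. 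Since $\sum_ia_i=1$, this adds $\varepsilon$ to the linear part and removes the factor $e^{-1}$, giving exactly the displayed dual objective. Since the shift is a bijection of $\mathbb{R}^n$, the suprema agree.

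The remaining step is to show there is no duality gap and that the supremum is attained; I expect this to be the main obstacle. For strong duality I would invoke Slater-type conditions for convex problems with affine equality constraints. The objective is convex and proper, and the constraints are linear. For the refined Slater condition with only affine constraints it suffices that a feasible point lies in the relative interior of the domain of the objective, i.e. a $P\in\mathcal{M}$ that is strictly positive wherever $Q>0$. The plan is to assume, or argue from the standing assumption $\mathcal{M}\neq\emptyset$ together with $Q>0$ (the Gaussian-copula prior is fully supported), that such a feasible point exists. If $\mathcal{M}$ only contains $P$ with zeros, I would instead appeal to the polyhedral version of Fenchel--Rockafellar duality: the constraint set is polyhedral and the KL function has a closed domain, and this version yields zero gap given mere feasibility with finite value. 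Either way, zero gap holds.

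For attainment of the maximum, note that the dual objective is concave and invariant under $(f,g,h)\mapsto(f+s,g+t,h-s-t)$ and similar gauge directions. These directions are exactly the kernel of the adjoint constraint map, since $\sum a_i=\sum b_j=\sum c_k=1$. Modulo this kernel, I would argue coercivity from a strictly positive feasible point, which makes the max attained; otherwise the statement should be read with $\sup$. Finally, the primal minimiser is unique and equals $P^*_{ij}=Q_{ij}e^{(f_i+g_j+h_{k(i,j)}-\beta C_{ij})/\varepsilon}$ at the optimal potentials (after the shift), which serves as a consistency check with the uniqueness noted in Section~\ref{sec: bridge}.
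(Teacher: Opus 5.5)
Your argument is essentially the paper's: form the Lagrangian, minimise entrywise to get the Gibbs form $P_{ij}=Q_{ij}e^{(f_i+g_j+h_{k(i,j)}-\beta C_{ij})/\varepsilon}$, and substitute back. The only cosmetic difference is that the paper adds the term $\varepsilon\bigl(\sum_{i,j}Q_{ij}-\sum_{i,j}P_{ij}\bigr)$, which is zero on the feasible set, to the Lagrangian; this removes your factor $e^{-1}$ and makes the shift $f\mapsto f+\varepsilon$ unnecessary.

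The paper stops at this formal first-order computation and never addresses the duality gap or attainment, so your last two paragraphs go beyond it. Your caveat that attainment of the max requires a feasible $P$ positive wherever $Q>0$ (otherwise only a $\sup$ holds) is correct. In the degenerate case, however, the zero gap comes from the conjugate of the entropic objective being finite everywhere (the dual-side qualification in Fenchel--Rockafellar), or alternatively from lower semicontinuity of the value function by compactness of the feasible set. It does not come from polyhedrality, which only removes the relative-interior requirement for the affine constraints, not for the KL term.
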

\begin{proof}
Introducing the dual variables $f \in \mathbb{R}^{n}$, $g \in \mathbb{R}^{m}$ and $h \in \mathbb{R}^{l}$ for each constraint, the Lagrangian of~\eqref{eq:mainProb} reads\footnote{Note that being probability measures, the term $-\sum_{i,j}P_{i,j}+Q_{i,j}$ is zero, but we have included this as it simplifies some of the forthcoming expressions.}
\begin{eqnarray}
  \mathcal{E}(P,f,g,h)
  &:=& \beta \sum_{i,j} P_{i,j} C_{i,j} 
    + \varepsilon\, \left(\mathrm{KL}(P \mid\mid Q) -\sum_{i,j}P_{i,j}+Q_{i,j}\right)- \sum_{i}f_i \left(\sum_{j}P_{i,j}-a_i\right) \nonumber\\
    &-& \sum_{j}g_j \left(\sum_{i}P_{i,j}-b_i\right) \label{eq:Lagrangian}
    - \sum_{k}h_k \left(\sum_{(i,j)\in D_k}P_{i,j}-c_k\right)
\end{eqnarray}
First-order conditions then yield
\begin{equation*}
  \frac{\partial \mathcal{E}(P,f,g,h)}{\partial P_{i,j}}
  = \beta C_{i,j} + \varepsilon \log(\frac{P_{i,j}}{Q_{i,j}}) - f_i - g_j - h_k= 0,
\end{equation*}
where $k$ is such that $(i,j)\in D_{k}$.
The optimal coupling $P$ therefore has the form
$$
  P_{i,j}
  = Q_{i,j}e^{(f_i+g_j+h_k-\beta C_{i,j})/\varepsilon}.
$$
Plugging this into~\eqref{eq:Lagrangian} and cancelling relevant expressions, we get the result of the Theorem\footnote{In the $(i,j)$ summation we have made explicit the dependence of $k$ on $(i,j)$ for clarity.}.
\end{proof}

\begin{corollary}\label{cor:dual}
The Shr\"odinger potentials $f^*,g^*,h^*$ that solve the concave optimisation problem in Theorem~\ref{thm:dual} satisfy
$$
L^{\varepsilon,\beta}_{C}(a,b,c)=\sum_{i}f^*_i a_i + \sum_{j}g^*_j b_j + \sum_{k}h^*_k c_k,
$$
and the joint probability reads
$$
  P_{i,j}
  = Q_{i,j}e^{(f^*_i+g^*_j+h^*_k-\beta C_{i,j})/\varepsilon}.
$$
\end{corollary}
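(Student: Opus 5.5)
The plan is to read both claims off the first-order optimality conditions of the dual problem in Theorem~\ref{thm:dual}. Write the dual objective as
\[
D(f,g,h):=\sum_i f_i a_i+\sum_j g_j b_j+\sum_k h_k c_k-\varepsilon\sum_{i,j}Q_{i,j}e^{(f_i+g_j+h_{k(i,j)}-\beta C_{i,j})/\varepsilon}+\varepsilon .
\]
This function is concave: it is linear minus a nonnegative combination of exponentials of affine functions of $(f,g,h)$. Theorem~\ref{thm:dual} asserts that the maximum is attained, so we may take maximisers $(f^*,g^*,h^*)$.

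\textbf{Step 1 (form of the coupling).} Define
\[
P^*_{i,j}:=Q_{i,j}e^{(f^*_i+g^*_j+h^*_{k(i,j)}-\beta C_{i,j})/\varepsilon}.
\]
Since $D$ is differentiable and $(f^*,g^*,h^*)$ is an interior maximiser, its gradient vanishes there. Setting $\partial D/\partial f_i=0$, $\partial D/\partial g_j=0$ and $\partial D/\partial h_k=0$ gives
\[
\sum_j P^*_{i,j}=a_i,\qquad \sum_i P^*_{i,j}=b_j,\qquad \sum_{(i,j)\in\mathscr{D}_k}P^*_{i,j}=c_k .
\]
Hence $P^*\in\mathscr{C}_1\cap\mathscr{C}_2\cap\mathscr{C}_3$, and it is nonnegative.

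\textbf{Step 2 (optimality of $P^*$).} This coupling has exactly the form produced by the first-order condition $\partial\mathcal{E}/\partial P_{i,j}=0$ in the proof of Theorem~\ref{thm:dual}. The map $P\mapsto\mathcal{E}(P,f^*,g^*,h^*)$ is strictly convex, so $P^*$ minimises it. Because $P^*$ is feasible, every constraint term in $\mathcal{E}$ vanishes. The penalty $-\sum_{i,j}P^*_{i,j}+\sum_{i,j}Q_{i,j}$ also vanishes: summing the first family of constraints gives $\sum_{i,j}P^*_{i,j}=\sum_i a_i=1$, and $Q$ is a probability measure. So $\mathcal{E}(P^*,f^*,g^*,h^*)$ equals the primal objective at $P^*$, and it also equals $D(f^*,g^*,h^*)=L^{\varepsilon,\beta}_C(a,b,c)$. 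Therefore $P^*$ attains the primal minimum in~\eqref{eq:mainProb}. By the uniqueness noted in Section~\ref{sec: bridge} (strict convexity of KL on the affine feasible set), $P^*$ is the joint probability, which proves the second claim.

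\textbf{Step 3 (value identity).} In $D(f^*,g^*,h^*)$ the exponential term equals $\varepsilon\sum_{i,j}P^*_{i,j}=\varepsilon$, by Step 1 and the mass computation in Step 2. It cancels with the additive constant $+\varepsilon$. Only the linear part remains, so
\[
L^{\varepsilon,\beta}_C(a,b,c)=\sum_i f^*_i a_i+\sum_j g^*_j b_j+\sum_k h^*_k c_k .
\]

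\textbf{Main obstacle.} The delicate point is attainment of the dual maximum. The dual is invariant under shifts such as $f\to f+t$, $g\to g-t$, so maximisers are not unique. Existence could fail if some constraint forces $P_{i,j}=0$ on the support of $Q$. I would take attainment from Theorem~\ref{thm:dual} as stated, which writes a max. If an independent argument were wanted, I would assume a feasible $P\in\mathcal{M}$ with the same support as $Q$ (a Slater-type condition). I would then show that $D$ is coercive on the quotient by the shift directions, giving a maximiser via the usual Fenchel--Rockafellar argument.
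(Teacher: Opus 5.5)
Your proposal is correct and follows the same route the paper takes; the paper gives no separate proof, so its argument is the one implicit in the proof of Theorem~\ref{thm:dual}: the exponential form of the coupling from the first-order condition in $P$, and the cancellation of $\varepsilon\sum_{i,j}P^*_{i,j}=\varepsilon$ against the additive $+\varepsilon$ to get the value identity. Your extra steps make explicit what the paper leaves unstated: deriving feasibility of $P^*$ from stationarity of the dual, and flagging that attainment of the maximum needs a positivity (Slater-type) condition, which the paper simply assumes by writing $\max$.
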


\section{Implementation details}\label{sec:impl}

In this section we discuss the practical implementation details of solving Theorem~\ref{thm:dual}.

\subsection{Discretisation}
In applications $n$ and $m$ will be large and we do not want to solve for a substantial number of parameters.
We therefore approximate (similar in spirit to~\cite{Guyon2024} and~\cite{GuyonBourgey2024})
\begin{align}
L_{C}^{\varepsilon,\beta}(a,b,c)
&\approx \max_{\theta} \Bigg\{
  m_0 + m_1\,\mathbb{E}[X] + m_2\,\mathbb{E}[Y]
  + \sum_{r\in\mathcal{K}_X} q^X_r C^X(r)
  + \sum_{r\in\mathcal{K}_Y} q^Y_r C^Y(r) \notag \\
 &\qquad  + \sum_{r\in\mathcal{K}_Z} q^Z_r C^Z(r) 
- \varepsilon \sum_{i,j} Q_{ij}\,
  \exp\!\Bigg(
    \frac{1}{\varepsilon}\Big[
      m_0 + m_1 x_i + m_2 y_j
      + \sum_{r\in\mathcal{K}_X} q^X_r (x_i - r)^{+} \notag \\
&\qquad\qquad\qquad
     + \sum_{r\in\mathcal{K}_Y} q^Y_r (y_j - r)^{+}  
     + \sum_{r\in\mathcal{K}_Z} q^Z_r (x_i - y_j - r)^{+}
      - \beta C_{ij}
    \Big]
  \Bigg)
  + \varepsilon
\Bigg\},
\end{align}
where
$$
\theta:=\{m_0,m_1,m_2\}\cup \{q^X_r\}_{r\in\mathcal{K}_X} \cup \{q^Y_r\}_{r\in\mathcal{K}_Y}  \cup \{q^Z_r\}_{r\in\mathcal{K}_Z},
$$
$C^{X},C^{Y},C^{Z}$ are target call option prices on $X,Y,Z$ and $\mathcal{K}_{X},\mathcal{K}_{Y},\mathcal{K}_{Z}$ are the corresponding strike sets (typically 20 points per set spanning +/- 3 standard deviations around at-the-money).
The Jacobian is then given explicitly by
\begin{align*}
\nabla_{\theta} L
&= \begin{pmatrix}
\frac{\partial L}{\partial m_0} \\[4pt]
\frac{\partial L}{\partial m_1} \\[4pt]
\frac{\partial L}{\partial m_2} \\[4pt]
\{\frac{\partial L}{\partial q^X_r}\}_{r\in\mathcal{K}_X} \\[4pt]
\{\frac{\partial L}{\partial q^Y_r}\}_{r\in\mathcal{K}_Y} \\[4pt]
\{\frac{\partial L}{\partial q^Z_r}\}_{r\in\mathcal{K}_Z}
\end{pmatrix}
= \begin{pmatrix}
1 - \sum_{i,j} W_{ij} \\[4pt]
\mathbb{E}[X] - \sum_{i,j} W_{ij}\, x_i \\[4pt]
\mathbb{E}[Y] - \sum_{i,j} W_{ij}\, y_j \\[4pt]
\{C^X(r) - \sum_{i,j} W_{ij}(x_i - r)^{+}\}_{r\in\mathcal{K}_X} \\[4pt]
\{C^Y(r) - \sum_{i,j} W_{ij}(y_j - r)^{+}\}_{r\in\mathcal{K}_Y} \\[4pt]
\{C^Z(r) - \sum_{i,j} W_{ij}(x_i - y_j - r)^{+}\}_{r\in\mathcal{K}_Z}
\end{pmatrix},
\end{align*}
where
\begin{align*}
W_{ij}
&:= Q_{ij}\, \exp\!\left(\frac{A_{ij}(\theta)}{\varepsilon}\right), \\[8pt]
A_{ij}(\theta)
&:= m_0 + m_1 x_i + m_2 y_j
   + \sum_{r\in\mathcal{K}_X} q^X_r (x_i - r)^{+}
   + \sum_{r\in\mathcal{K}_Y} q^Y_r (y_j - r)^{+} \\
&\quad + \sum_{r\in\mathcal{K}_Z} q^Z_r (x_i - y_j - r)^{+}
   - \beta C_{ij}.
\end{align*}
The Hessian has entries
\begin{equation*}
\frac{\partial^2 L}{\partial \theta_k \, \partial \theta_\ell}
= -\frac{1}{\varepsilon} \sum_{i,j} W_{ij}\, \phi^{(\theta_k)}_{ij}\, \phi^{(\theta_\ell)}_{ij},
\end{equation*}
where the derivatives $\phi^{(\theta_k)}_{ij} := \partial_{\theta_k} A_{ij}$ are given by
\begin{align*}
\phi^{(m_0)}_{ij} &= 1,
\qquad
\phi^{(m_1)}_{ij} = x_i,
\qquad
\phi^{(m_2)}_{ij} = y_j, \\
\phi^{(q^X_r)}_{ij} &= (x_i - r)^{+},
\qquad
\phi^{(q^Y_r)}_{ij} = (y_j - r)^{+},
\qquad
\phi^{(q^Z_r)}_{ij} = (x_i - y_j - r)^{+}.
\end{align*}
We solve this optimisation problem using the Jacobian and Hessian and a gradient based method (e.g. a damped Newton method) and the initial condition is given by $\theta\equiv 0$ (this just starts with the prior joint density as an initial guess).

Let $(\xi^{X}_i)_{i=1}^{n}$ and $(\xi^{Y}_j)_{j=1}^{m}$ be node points from a $n$ and $m$ point Gauss-Legendre scheme (in our case we set $n=m=200$) and let $w_i^{X}$ and $w_j^{Y}$ be the corresponding weights respectively.
Denote the associated CDF's of $X$ and $Y$ as $M^{X}$ and $M^{Y}$.
For a small $q>0$ we choose
$x_{low}=\left(M^{X}\right)^{-1}(q)$, $y_{low}=\left(M^{Y}\right)^{-1}(q)$, $x_{high}=\left(M^{X}\right)^{-1}(1-q)$ and $y_{high}=\left(M^{Y}\right)^{-1}(1-q)$.
Then we set
$$
Q_{i,j}=c\left(M^{X}(x_i), M^{Y}(y_j)\right)f^{X}(x_i)  f^{Y}(y_j)\widetilde{w}_i^{X} \widetilde{w}_j^{Y},
$$
where $c=\partial^2_{uv}C(u,v)$ is a given copula density, $f^{X}, f^{Y}$ are the associated marginal densities 
$$
\widetilde{w}_i^{X} =\frac{x_{high}-x_{low}}{2}w_i^{X},\qquad \widetilde{w}_j^{Y} =\frac{y_{high}-y_{low}}{2}w_j^{Y},
$$
and
$$
x_i=\frac{x_{high}-x_{low}}{2}\xi^{X}_i+\frac{x_{high}+x_{low}}{2},\qquad y_j=\frac{y_{high}-y_{low}}{2}\xi^{Y}_j+\frac{y_{high}+y_{low}}{2}.
$$
In this paper we consider the Gaussian and t-copula cases.
More specifically, in the Gaussian copula case we have that
$$
c(u,v;\rho)=\frac{1}{\sqrt{1-\rho^2}}\exp\left(\frac{2\rho\Phi^{-1}(u)\Phi^{-1}(v)-\rho^2(\Phi^{-1}(u)^2+\Phi^{-1}(v)^2)}{2(1-\rho^2)}\right),
$$
and in the t-copula case we have that
\begin{align*}
c(u,v; \rho, \nu) = \frac{\Gamma(\frac{\nu+2}{2})\Gamma(\frac{\nu}{2})}{[\Gamma(\frac{\nu+1}{2})]^2} \frac{1}{\sqrt{1-\rho^2}} \left(1 + \frac{q_\rho}{\nu}\right)^{-\frac{\nu+2}{2}} %\\
\times \prod_{i=1,2} \left(1 + \frac{t_i^2}{\nu}\right)^{\frac{\nu+1}{2}},
\end{align*}
where $q_\rho = \frac{t_1^2 - 2\rho t_1 t_2 + t_2^2}{1-\rho^2}$ and $t_1= T_\nu^{-1}(u),t_2= T_\nu^{-1}(v)$ are t-distribution quantiles.
The reason for choosing a $t$-copula is that we can use the degrees of freedom ($\nu$) to control the heaviness of the tails and the strength of tail dependence and assess how this affects exotic prices via the prior joint probabilities. 
One can use any reasonable prior joint pdf in place of the ones above.   

\subsection{Adaptive epsilon scheduling}\label{sec:adaptive}

As $\varepsilon\downarrow 0$, the solution to~\eqref{eq:mainProb} converges to the non-entropic optimal transport solution (the constraint set $\mathcal{M}$ is closed and
bounded). 
However, for small $\varepsilon$, the optimisation becomes increasingly ill-conditioned. Exponential terms can become extremely large or small, leading to numerical instability in gradient and Hessian computations. This is a well-known challenge in the entropic optimal transport literature.

To enhance numerical stability and convergence for small $\varepsilon$ (typically $\varepsilon<0.01$), we employ adaptive epsilon scheduling (or annealing; see, e.g.,~\cite{schmitzer2019}). This approach begins with a larger value of $\varepsilon$, solves the corresponding problem, and uses the solution as a warm start for progressively smaller values of $\varepsilon$. Large $\varepsilon$ values correspond to more regularized problems that converge rapidly, and warm-starting accelerates convergence as $\varepsilon$ decreases. We define an adaptive schedule $\{\varepsilon_k\}_{k=1}^K$ as follows:
\begin{equation*}
\varepsilon_k = \varepsilon_{\text{start}} \cdot \left(\frac{\varepsilon_{\text{target}}}{\varepsilon_{\text{start}}}\right)^{\frac{k-1}{K-1}},
\quad
\varepsilon_{\text{start}} = \max(0.05, 10\varepsilon_{\text{target}}),
\quad
K = \max\left(5, \left\lfloor -2\log\left(\frac{\varepsilon_{\text{target}}}{\varepsilon_{\text{start}}}\right) \right\rfloor\right).
\end{equation*}
This approach is faster than directly solving the linear (or quadratic) program for the non-entropic optimal transport problem, as done in~\cite{Piterbarg2011}\footnote{For complexity analyses of entropic optimal transport (including Sinkhorn and gradient\text{-}based solvers) and for pointers to complexity results on the unregularized optimal transport problem (e.g., network simplex and interior\text{-}point methods), see~\cite{dvurechensky2018} and the references therein.}.

\subsection{Explicit translation of problem to interest rate markets}\label{sec: IR}

Fix an expiry $T>0$ and let $S(t,T,T^{'})$ be a forward swap rate observed at time $t\leq T$ for the period $[T,T']$.
Then in our case we set $X=S(T,T,T_1)$ and $Y=S(T,T,T_0)$ for $T_1>T_0$.
All expectations and PDF's/CDF's are taken under the $T$-forward measure.
We suppose that we are given a market of CMS options on $X$ and $Y$ and a market of CMS spread options on the difference $X-Y$.

\section{Numerical examples}\label{sec:numerics}

We first consider a 2s20s CMS pair with 5-year expiry. 
Figures~\ref{fig:2CMSBaseGC20s2s2025}--\ref{fig:20y2yBaseGC2025} demonstrate the calibration accuracy for the ``Base GC'' specification ($\varepsilon=1$, $\beta=0$, Gaussian copula prior), while Figures~\ref{fig:2CMSLowerGC2025}--\ref{fig:20y2yLowerGC2025} show results for the ``Lower GC'' specification ($\varepsilon=0.001$, $\beta=-1$ and the payoff is an option on $X+Y$ with at-the-money strike $\mathbb{E}[X+Y]$).\footnote{As discussed in Section~\ref{sec: IR}, all expectations are taken under the $T$-forward measure.}

CMS and spread option volatilities are implied from prices using the Bachelier model, with forwards set to the respective CMS forward and CMS spread forward rates. The calibration employs 20 strikes spanning $\pm 3$ standard deviations around at-the-money for each of the two CMS rates and the spread rate. Calibration quality is excellent across all examples; all results presented in this section achieve similar accuracy.

\begin{figure}[H]
    \centering
    \includegraphics[width=0.8\textwidth]{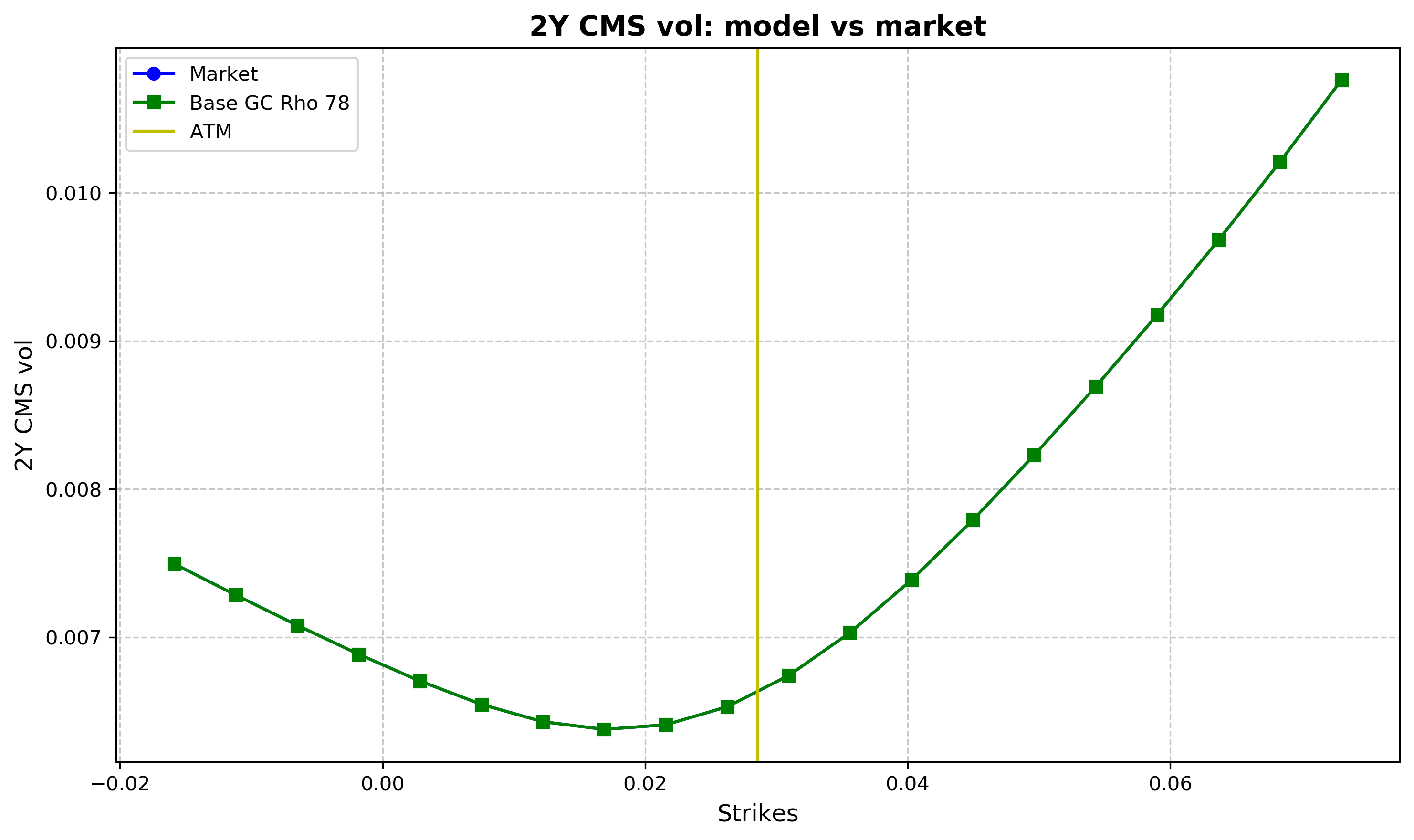}
    \caption{Calibration accuracy for 2Y CMS options for ``Base GC' with $\rho=78\%$.}
    \label{fig:2CMSBaseGC20s2s2025}
\end{figure}
\begin{figure}[H]
    \centering
    \includegraphics[width=0.8\textwidth]{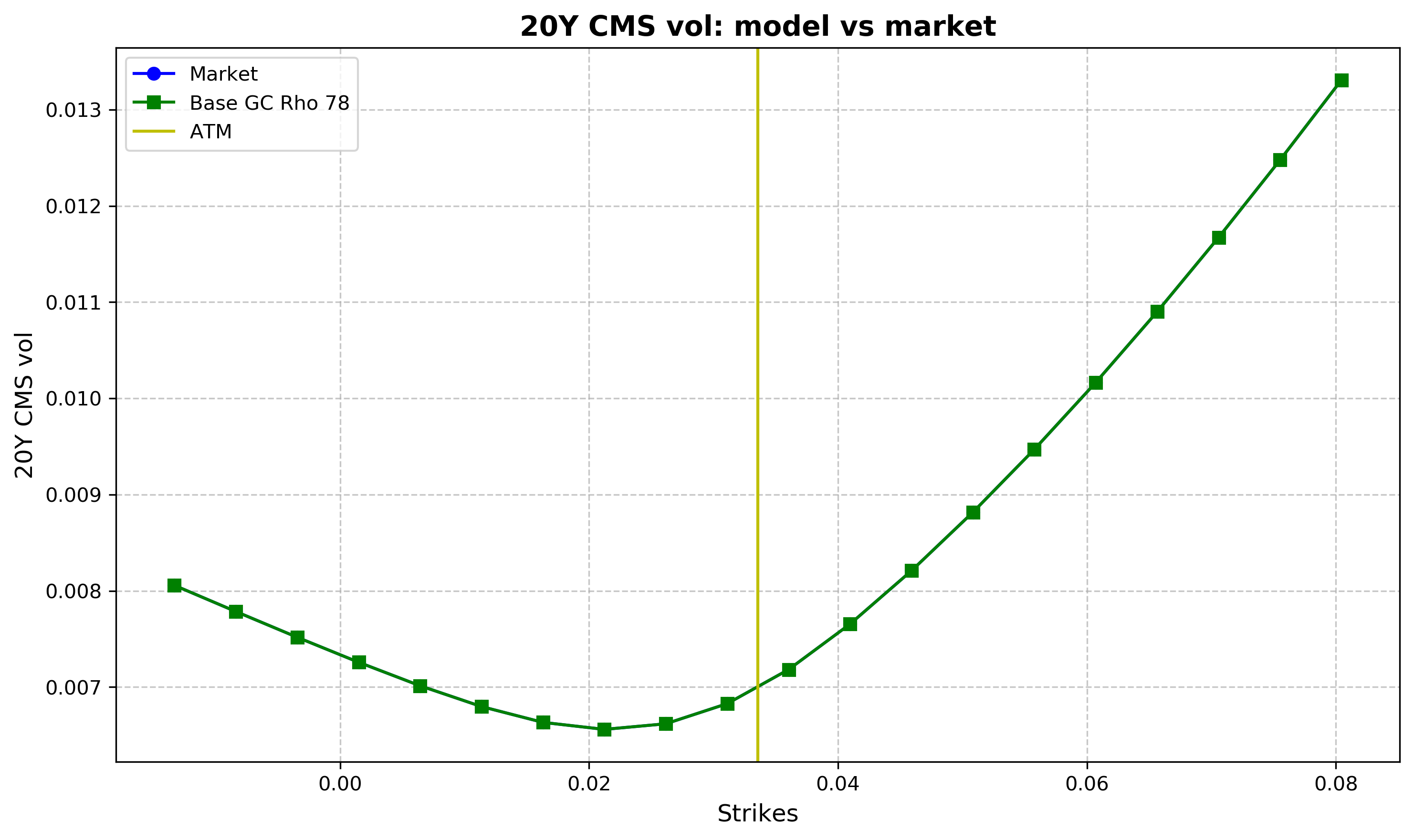}
    \caption{Calibration accuracy for 20Y CMS options for ``Base GC' with $\rho=78\%$.}
    \label{fig:20CMSBaseGC2025}
\end{figure}
\begin{figure}[H]
    \centering
    \includegraphics[width=0.8\textwidth]{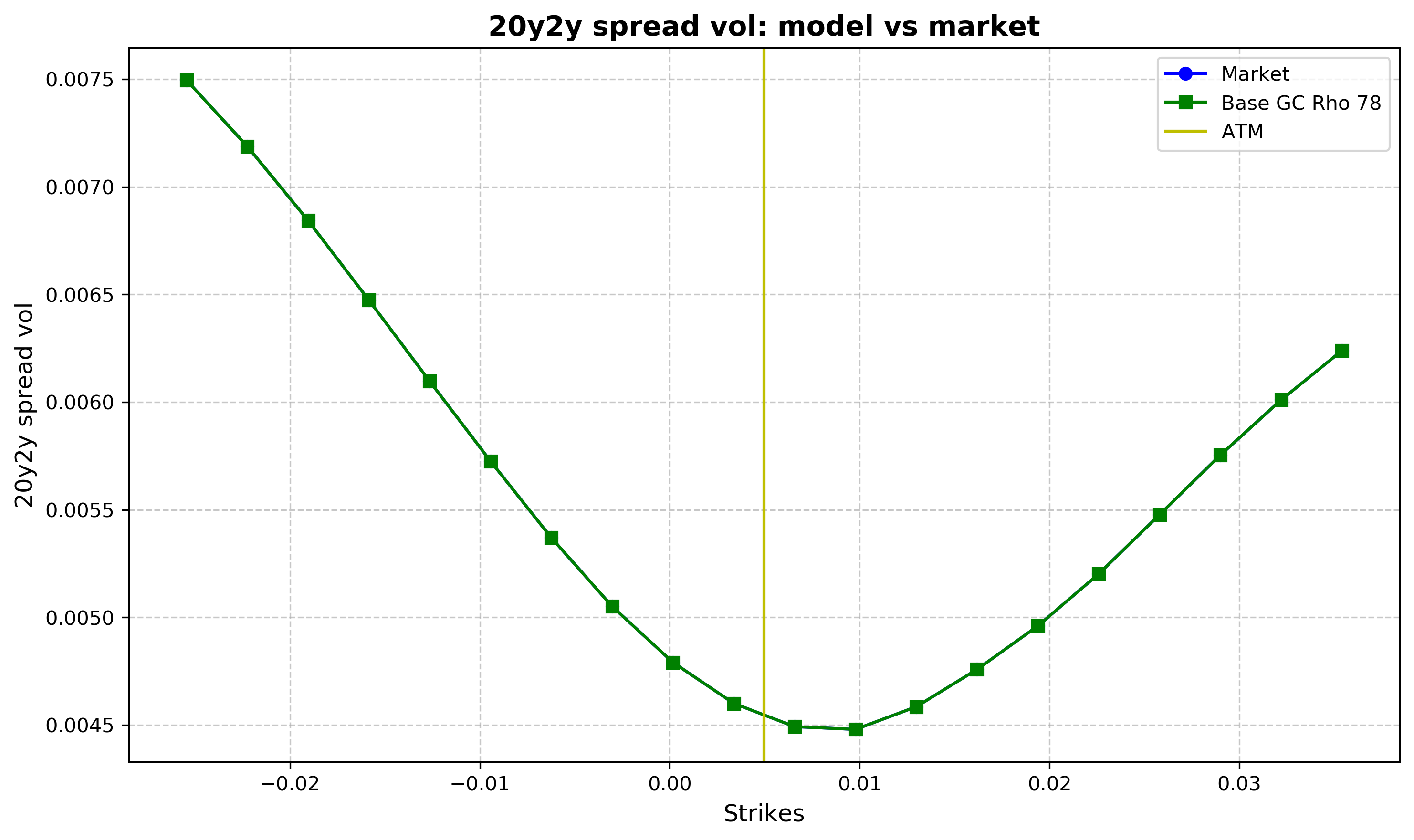}
    \caption{Calibration accuracy for 2s20s CMS spread options for ``Base GC' with $\rho=78\%$.}
    \label{fig:20y2yBaseGC2025}
\end{figure}

\begin{figure}[H]
    \centering
    \includegraphics[width=0.8\textwidth]{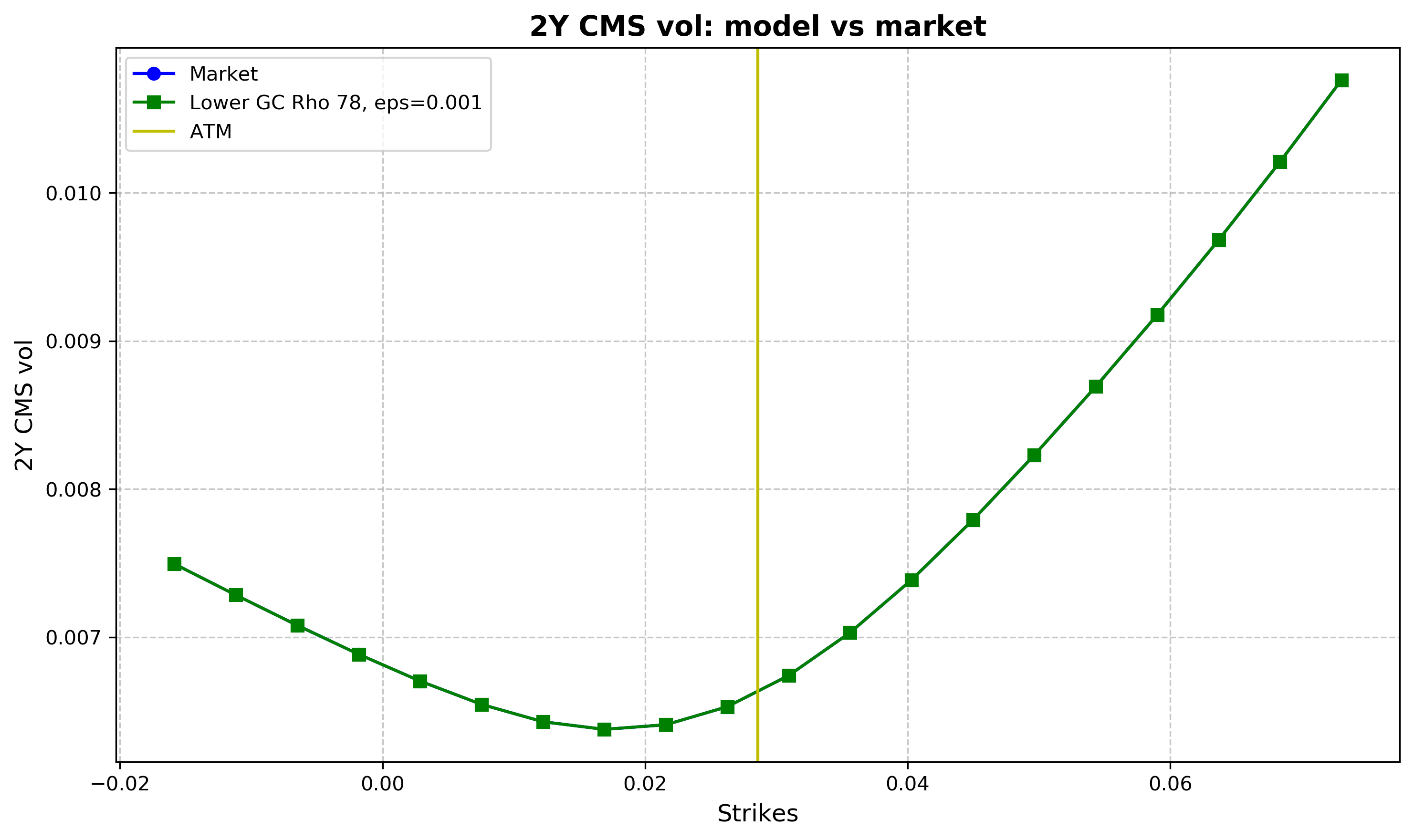}
    \caption{Calibration accuracy for 2Y CMS options for ``Lower GC' with $\rho=78\%$ and $\varepsilon=0.001$ (payoff is an ATM option on $X+Y$ as discussed in the text).}
    \label{fig:2CMSLowerGC2025}
\end{figure}
\begin{figure}[H]
    \centering
    \includegraphics[width=0.8\textwidth]{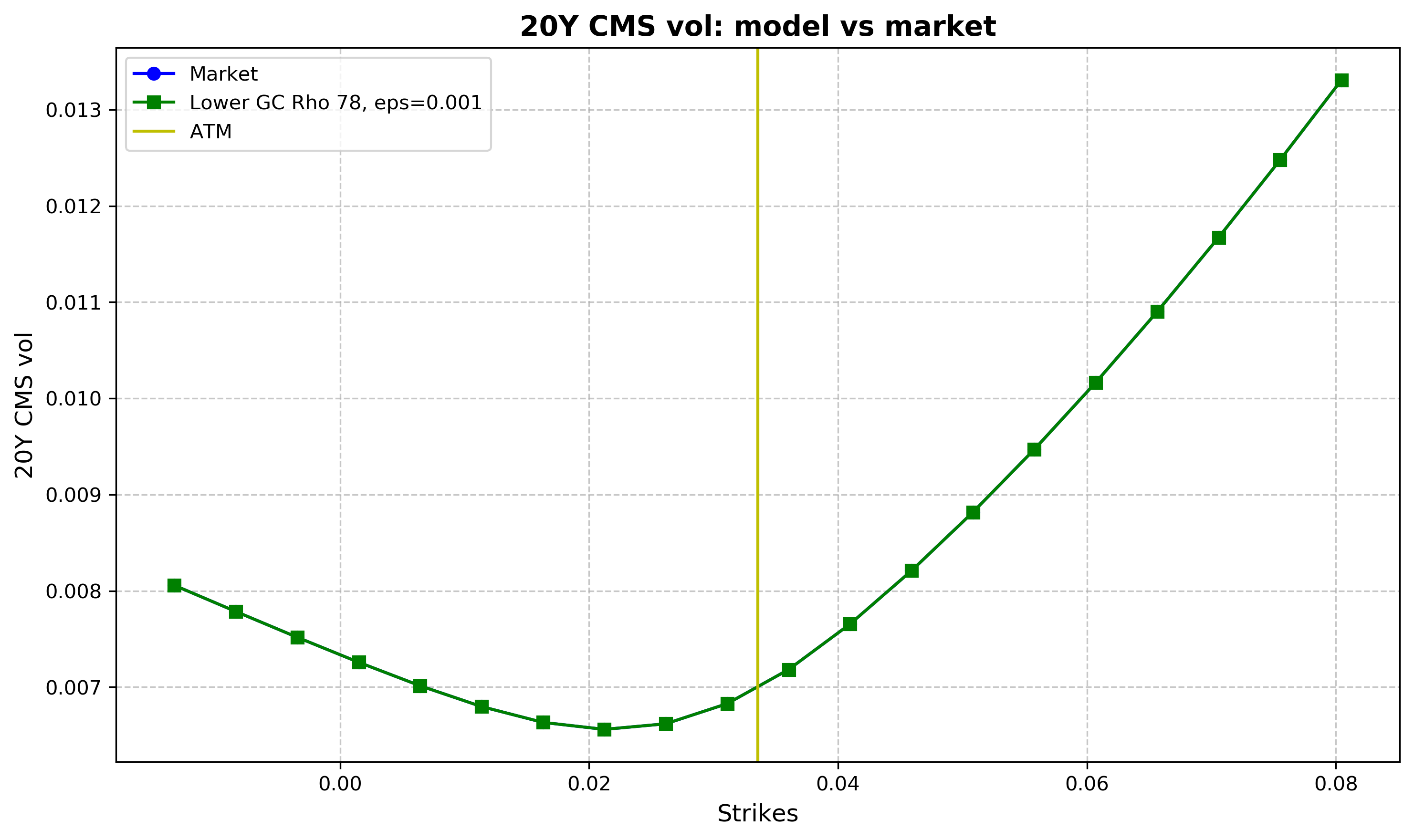}
    \caption{Calibration accuracy for 20Y CMS options for ``Lower GC' with $\rho=78\%$ and $\varepsilon=0.001$ (payoff is an ATM option on $X+Y$ as discussed in the text).}
    \label{fig:20CMSLowerGC2025}
\end{figure}
\begin{figure}[H]
    \centering
    \includegraphics[width=0.8\textwidth]{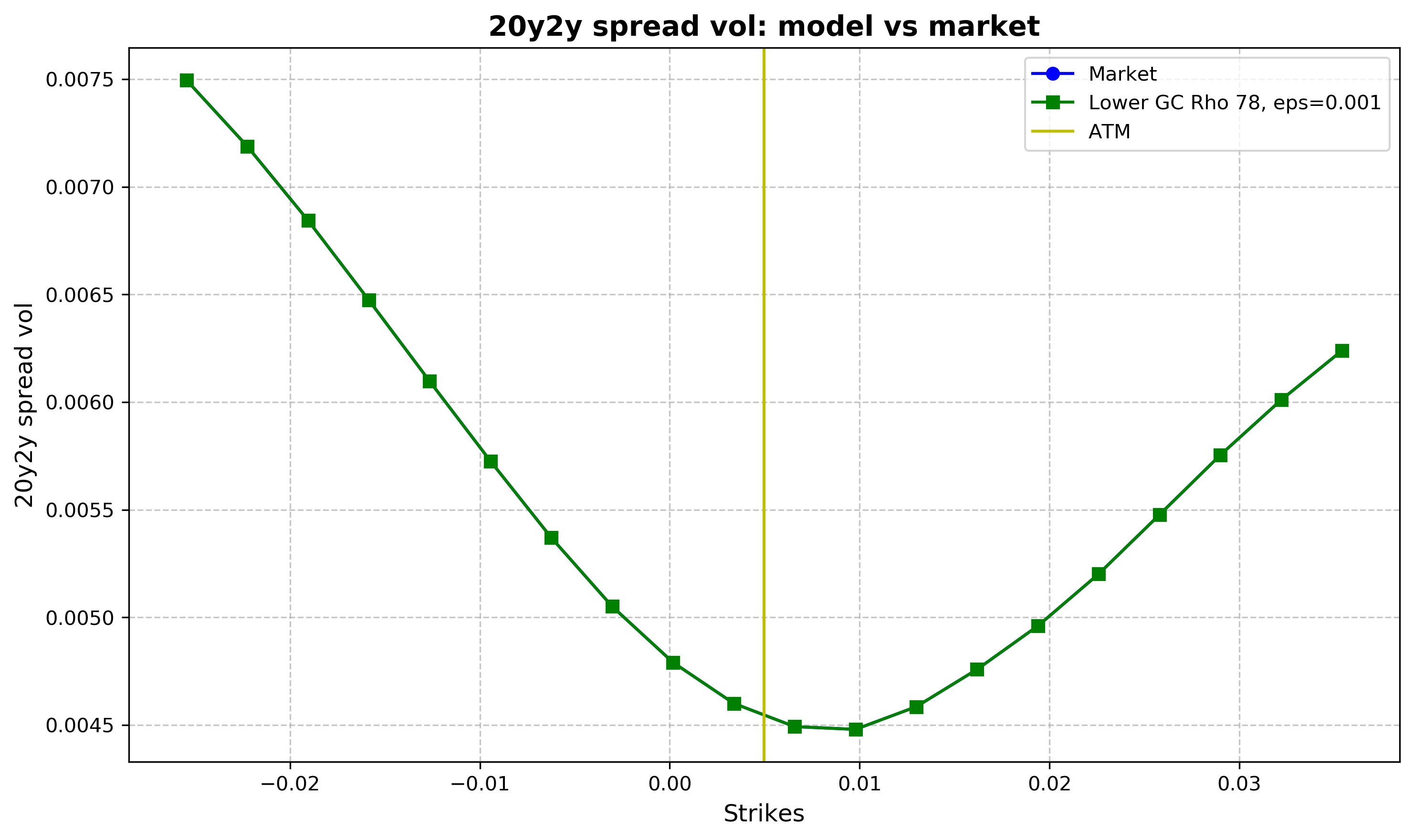}
    \caption{Calibration accuracy for 2s20s CMS spread options for ``Lower GC' with $\rho=78\%$ and $\varepsilon=0.001$ (payoff is an ATM option on $X+Y$ as discussed in the text).}
    \label{fig:20y2yLowerGC2025}
\end{figure}

We now present results for an option on $X+Y$ using a Gaussian copula prior with varying levels of $\varepsilon$, displayed in Figure~\ref{fig:rho_vs_exotic_strikes_20y2y_5Y_2_2025}. The expiry is again 5 years, and results are expressed in terms of implied Gaussian copula correlation (defined as the correlation parameter needed in a Gaussian copula model with marginal distributions of $X$ and $Y$ to reprice the option on $X+Y$).

As $\varepsilon\downarrow 0$, the solution to~\eqref{eq:mainProb} approaches the non-entropic optimal transport bounds for the payoff (see Section~\ref{sec:adaptive}). These bounds are displayed in the figure for $\varepsilon=0.001$.\footnote{Reducing $\varepsilon$ further produced no appreciable change in the bounds. Also, note that the prior becomes irrelevant for this level of $\varepsilon$. i.e. Changing to a t-copula prior with the same $\varepsilon$ yields essentially the same bounds.} We also present results for $\varepsilon=0.01$, which we believe strikes a good balance between anchoring to an economically meaningful joint density (such as a Gaussian copula) and capturing the model risk inherent in the product. 
Although the non-entropic optimal transport bounds encompass all joint distributions satisfying the marginal and spread constraints, some distributions achieving these bounds can appear financially implausible. Piterbarg~\cite{Piterbarg2011} observed this phenomenon and introduced a quadratic penalty term in the linear program to discourage such solutions. 

We further note that the bounds can be substantial and tend to widen as spread volatility increases. Figure~\ref{fig:rho_vs_exotic_strikes_30y2y_5Y_2_2022} illustrates this for a 5-year expiry 30y2y tenor pair on a business date when spread volatilities were significantly elevated.
In Figure~\ref{fig:rho_vs_exotic_strikes_20y2y_5Y_3_2025}, we examine the impact of the prior specification by employing a t-copula with varying degrees of freedom. This allows us to assess how increased tail dependence in the prior affects exotic prices and provides further insight into the model risk.

Finally, we note that in a small number of cases, the methodology failed to construct a joint distribution. Upon inspection, these failures arose from inconsistencies between out-of-the-money options on the CMS rates and the spread, strongly suggesting potential arbitrage and the non-existence of a joint distribution satisfying all constraints.

\begin{figure}[H]
    \centering
    \includegraphics[width=0.8\textwidth]{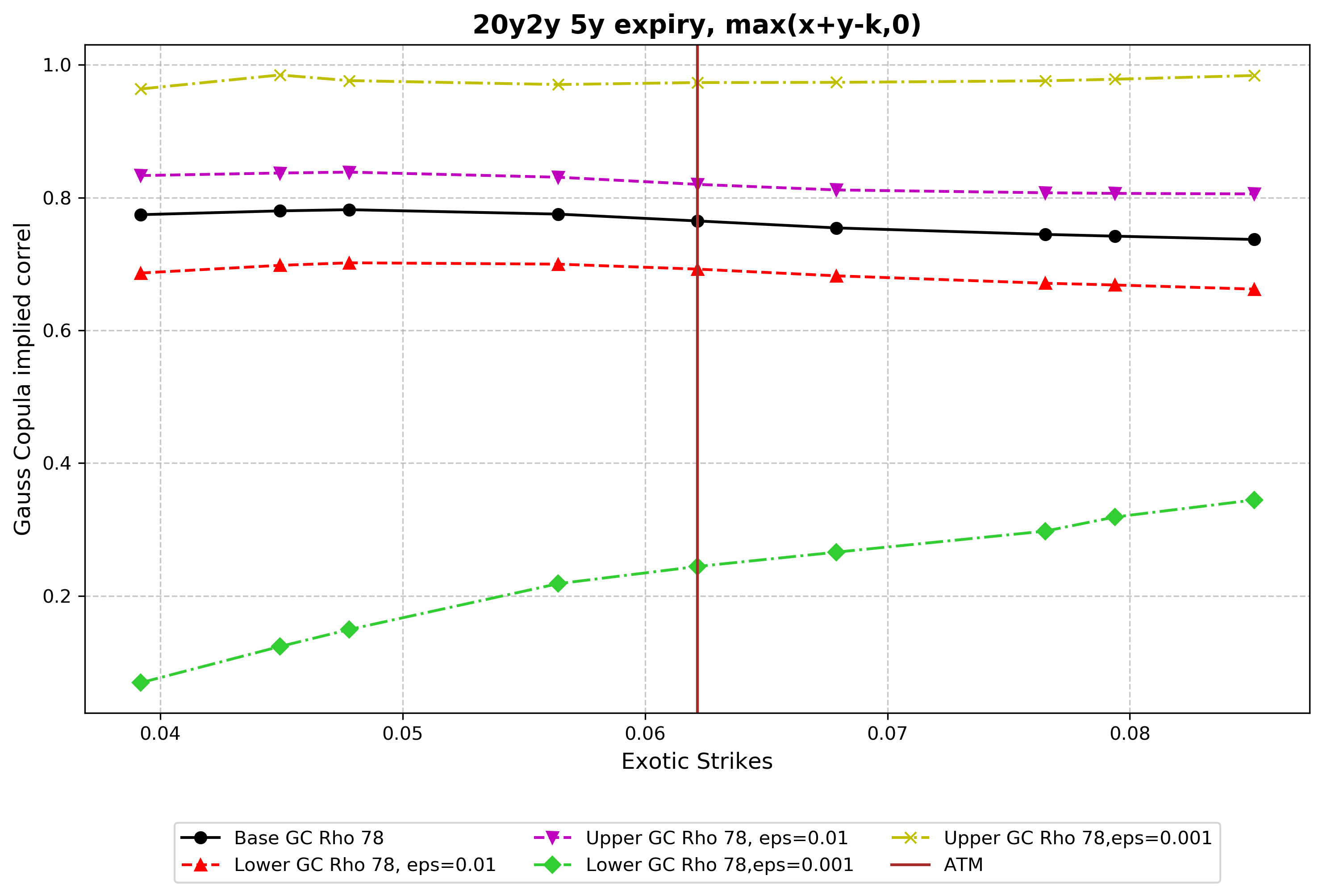}
    \caption{No-arbitrage bounds for the exotic given different $\varepsilon$.}
    \label{fig:rho_vs_exotic_strikes_20y2y_5Y_2_2025}
\end{figure}
\begin{figure}[H]
    \centering
    \includegraphics[width=0.8\textwidth]{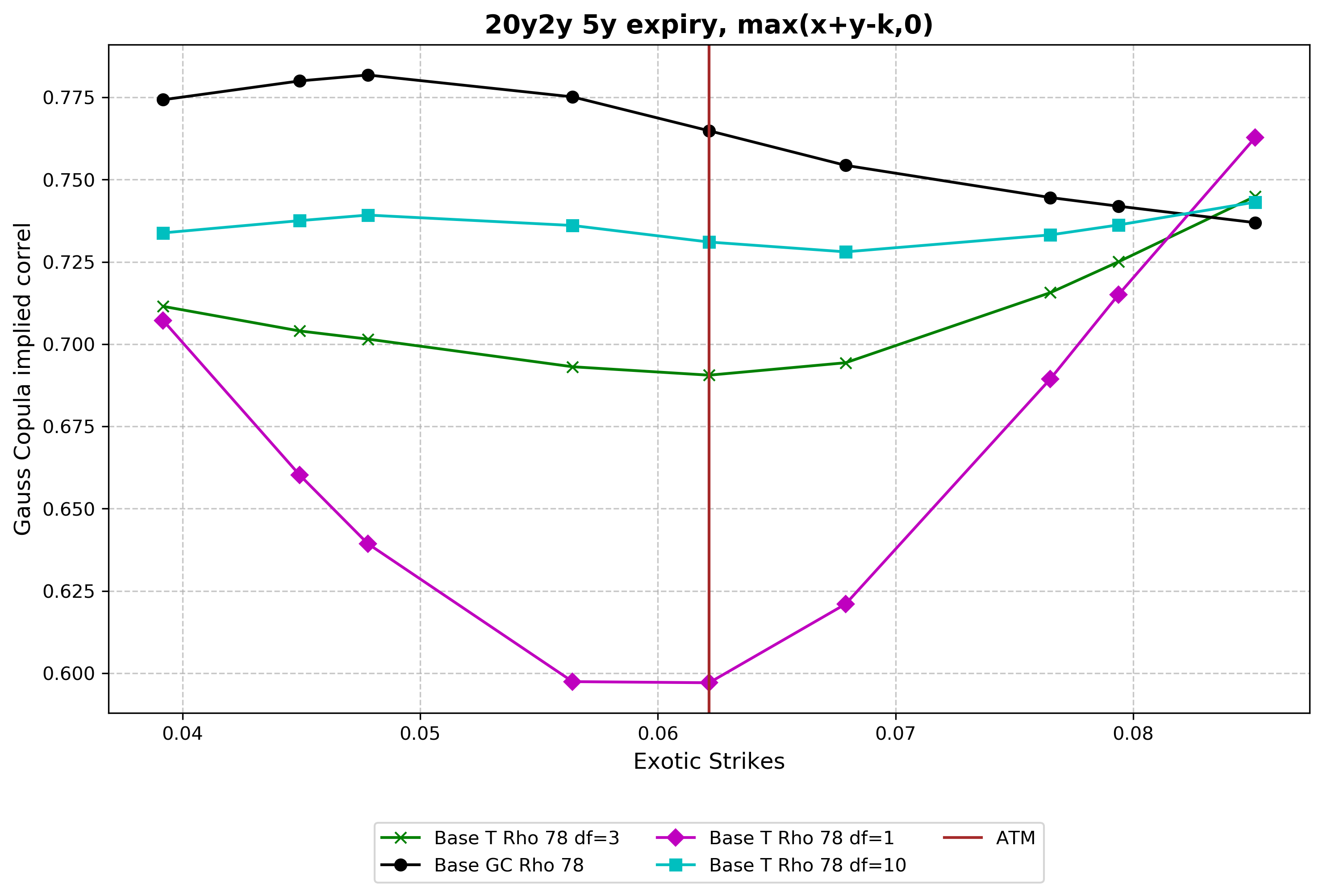}
    \caption{Price comparisons using different prior joint probabilities.}
    \label{fig:rho_vs_exotic_strikes_20y2y_5Y_3_2025}
\end{figure}

\begin{figure}[H]
    \centering
    \includegraphics[width=0.8\textwidth]{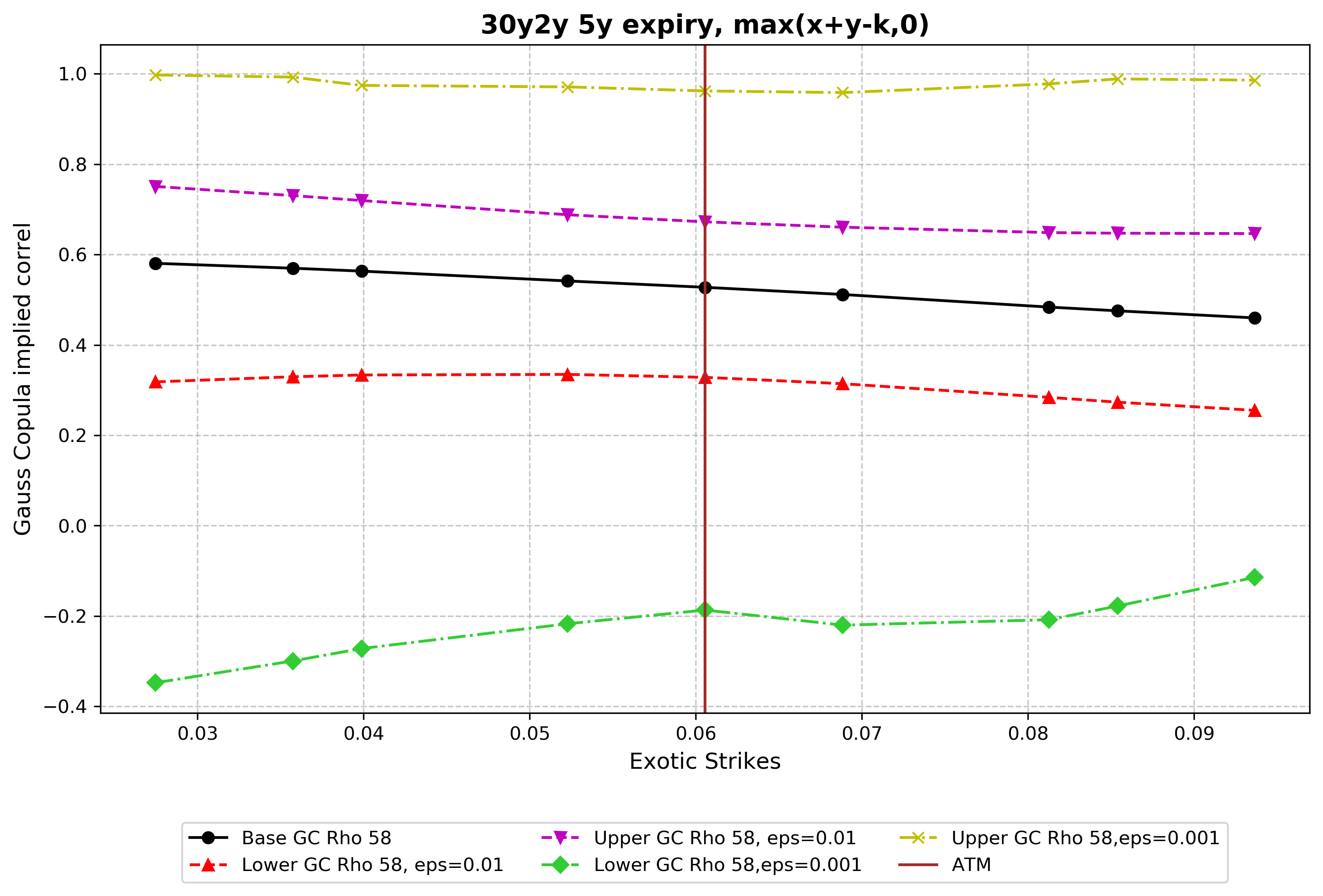}
    \caption{No-arbitrage bounds for the exotic given different $\varepsilon$ for a business date when spread volatilities were significantly elevated.}
    \label{fig:rho_vs_exotic_strikes_30y2y_5Y_2_2022}
\end{figure}

\noindent\textit{E-mail address:} \texttt{patrick.roome@yahoo.com}

\end{document}